\definecolor{darkgreen}{rgb}{0.0,0.7,0.0}
\newenvironment{tw}{\noindent\color{darkgreen} TW:}{}
\newenvironment{mk}{\noindent\color{blue} MK:} {}
\newenvironment{vd}{\noindent\color{red} VD:} {}
\newcommand{\refthm}[1]{Theorem~\ref{#1}\xspace}
\newcommand{\refcor}[1]{Corollary~\ref{#1}\xspace}
\newcommand{\reflem}[1]{Lemma~\ref{#1}\xspace}
\newcommand{\refclm}[1]{Claim~\ref{#1}\xspace}
\newcommand{\IFF}{if and only if\xspace}
\newcommand{\homo}{homomorphism\xspace}
\newcommand{\lr}{length-redu\-cing\xspace}
\newcommand{\set}[2]{\left\{#1\mathrel{\left|\vphantom{#1}\vphantom{#2}\right.}#2\right\}}
\newcommand{\oneset}[1]{\left\{\mathinner{#1}\right\}}
\newcommand{\smallset}[1]{\left\{#1\right\}}
\newcommand{\abs}[1]{\left|\mathinner{#1}\right|}
\newcommand{\Abs}[1]{\left\lVert\mathinner{#1}\right\rVert}
\newcommand{\floor}[1]{\left\lfloor\mathinner{#1} \right\rfloor}
\newcommand{\N}{\mathbb{N}}
\newcommand{\Z}{\mathbb{Z}}
\newcommand{\sse}{\subseteq}
\newcommand{\IRR}{\mathrm{IRR}}
\renewcommand{\phi}{\varphi}
\newcommand\RAS[2]{\overset{#1}{\underset{#2}{\Longrightarrow}}}
\newcommand\LAS[2]{\overset{#1}{\underset{#2}{\Longleftarrow}}}
\newcommand\DAS[2]{\overset{#1}{\underset{#2}{\Longleftrightarrow}}}
\newcommand\RA[1]{\underset{#1}{\Longrightarrow}}
\newcommand\LA[1]{\underset{#1}{\Longleftarrow}}
\newtheorem{theorem}{Theorem}
\newtheorem{definition}[theorem]{Definition}
\newtheorem{lemma}[theorem]{Lemma}
\newtheorem{corollary}[theorem]{Corollary}
\newtheorem{claim}{Claim}
\newtheorem{expl}[theorem]{Example}
\setlist{itemsep=3pt,parsep=3pt,topsep=4pt}
\begin{document}

\title{Regular Languages are {Church-Rosser} Congruential}

\author{%
  Volker Diekert\,$^{*}$ \\
  Manfred Kuf\-leitner\,$^{*}$
  \\
  \,\ Klaus Reinhardt\,$^{\dagger}$
  \\
  \,Tobias Walter\,$^{*}$ \\[4mm]
  {\footnotesize \!\!$^*$\;Institut 
    f\"ur Formale Methoden der Informatik} \\[-3.5mm]
  {\footnotesize University of Stuttgart, Germany} \\[0mm]
  {\footnotesize \!\!$^{\dagger}$\;Wilhelm-Schickard-Institut f{\"u}r 
    Informatik} \\[-3.5mm]
  {\footnotesize University of T{\"u}bingen, Germany}
}


\date{}

\maketitle

\begin{abstract}
  \noindent
  \textbf{Abstract.}\,
  This paper proves a long standing conjecture in formal language
  theory.  It shows that all regular languages are Church-Rosser
  congruential.  The class of Church-Rosser congruential languages was
  introduced by \mbox{McNaughton}, Narendran, and Otto in 1988. A
  language $L$ is Church-Rosser congruential, if there exists a finite
  confluent, and length-reducing semi-Thue system $S$ such that $L$ is
  a finite union of congruence classes modulo $S$.  It was known that
  there are deterministic linear context-free languages which are not
  Church-Rosser congruential, but on the other hand it was
  strongly believed that all regular language are of this form.
  Actually, this paper proves a more general result.\,\footnote{ The
    research on this paper was initiated during the program
    \emph{Automata Theory and Applications} at the Institute for
    Mathematical Sciences, National University of Singapore in
    September~2011. The second author was supported by the German
    Research Foundation (DFG) under grant \mbox{DI 435/5-1}.}

  \medskip

  \noindent
  \textbf{Keywords.}\, 
  String rewriting; Church-Rosser system; regular language;
  finite monoid; finite semigroup; local divisor.
\end{abstract}

\section{Introduction}\label{sec:intro}

It has been a long standing conjecture in formal language theory that
all regular languages are Church-Rosser congruential. The class of
Church-Rosser congruential languages was introduced by
\mbox{McNaughton}, Narendran, and Otto
in~1988~\cite{McNaughtonNO88}. A language $L$ is Church-Rosser
congruential, if there exists a finite confluent, and length-reducing
semi-Thue system $S$ such that $L$ is a finite union of congruence
classes modulo $S$. One of the main motivations to consider this class
of languages is that the membership problem for $L$ can be solved in
linear time; this is done by computing normal forms using the system
$S$, followed by a table look-up. For this it is not necessary that
the quotient monoid $A^*/S$ is finite, it is enough that $L$ is a
finite union of congruence classes modulo $S$. It is not hard to see
that $\set{a^nb^n}{n\in \N}$ is Church-Rosser congruential, but
$\set{a^mb^n}{m,n\in \N \text{ and } m \geq n }$ is not.  This led the
authors of~\cite{McNaughtonNO88} to the more technical notion of
Church-Rosser languages; this class of languages captures all
deterministic context-free languages. For more results about
Church-Rosser languages see
e.g.~\cite{BuntrockO98,Narendran84phd,Woinowski01Diss,Woinowski03IC}.

From the very beginning it was strongly believed that all regular
languages are Church-Rosser congruential in the pure sense.  However,
after some significant initial
progress~\cite{Narendran84phd,NiemannPhD02,NiemannO05,NiemannW02,reinhardtT03}
there was some stagnation.
 
Before 2011 the most advanced result was the one announced in 2003 by
Reinhardt and Th\'erien \cite{reinhardtT03}.  According to this
manuscript the conjecture is true for all regular languages where the
syntactic monoid is a group. However, the manuscript has never been
published as a refereed paper and there are some flaws in its
presentation.  The main problem with~\cite{reinhardtT03} has however
been quite different for us. The statement is too weak to be useful in
the induction for the general case. So, instead of being able to
use~\cite{reinhardtT03} as a black box, we shall prove a more general
result in the setting of weight-reducing systems. This part about
group languages is a cornerstone in our approach.

The other ingredient to our paper has been established only very
recently.  Knowing that the result is true if the the syntactic monoid
is a group, we started looking at aperiodic monoids. Aperiodic monoids
correspond to star-free languages and the first two authors together
with Weil proved that all star-free languages are Church-Rosser
congruential~\cite{DiekertKW12tcs}.  Our proof became possible by
\emph{loading the induction hypothesis}. This means we proved a much
stronger statement.  We showed that for every star-free language $L
\sse A^*$ there exists a finite confluent semi-Thue system $S\sse A^*
\times A^*$ such that the quotient monoid $A^*/ S$ is finite (and
aperiodic), $L$ is a union of congruence classes modulo $S$, and
moreover all right-hand sides of rules appear as scattered subwords in
the corresponding left-hand side. We called the last property
\emph{subword-reducing}, and it is obvious that every subword-reducing
system is length-reducing.

We have little hope that such a strong result could be true in
general. Indeed here we step back from subword-reducing to
weight-reducing systems.

We prove in \refthm{thm:main} the following result: Let $L \sse A^*$
be a regular language and $\Abs{a} \in \N \setminus \smallset{0}$ be a
positive weight for every letter $a \in A$ (e.g.,{} $ \Abs{a} =
\abs{a} =1$).  Then we can construct for the given weight a finite,
confluent and weight-reducing semi-Thue system $S\sse A^* \times A^*$
such that the quotient monoid $A^*/ S$ is finite and recognizes
$L$. In particular, $L$ is a finite union of congruence classes modulo
$S$.

Note that this gives us another characterization for the class of
regular languages.  By \refcor{cor:main} we see that a language $L
\subseteq A^*$ is regular if and only if $L$ is recognized by a finite
Church-Rosser system $S$ with finite index.  As a consequence, a long
standing conjecture about regular languages has been solved
positively.

\section{Preliminaries}\label{sec:prelim}

\paragraph{Words and languages}

Throughout this paper, $A$ is a finite alphabet. An element of $A$ is
called a \emph{letter}. The set $A^*$ is the free monoid generated by
$A$. It consists of all finite sequences of letter from $A$. The
elements of $A^*$ are called \emph{words}. The empty word is denoted
by $1$. The \emph{length} of a word $u$ is denoted by $\abs{u}$. We
have $\abs{u} = n$ for $u= a_1 \cdots a_n$ where $a_i \in A$. The
empty word has length $0$, and it is the only word with this property.
The set of word of length at most $n$ is denoted by $A^{\leq n}$, and
the set of all nonempty words is $A^+$.  We generalize the length of a
word by introducing weights. A \emph{weighted alphabet}
$(A,\Abs{\cdot})$ consists of an alphabet $A$ equipped with a weight
function $\Abs{\cdot} : A \to \N \setminus \smallset{0}$. The
\emph{weight} of a letter $a \in A$ is $\Abs{a}$ and the \emph{weight}
$\Abs{u}$ of a word $u= a_1 \cdots a_n$ with $a_i \in A$ is $\Abs{a_1}
+ \cdots + \Abs{a_n}$. The weight of the empty word is $0$. The length
is the special weight with $\Abs{a} = 1$ for all $a \in A$.  A word
$u$ is a \emph{factor} of a word $v$ if there exist $p,q \in A^*$ such
that $puq = v$, and $u$ is a \emph{proper factor} of $v$ if $pq \neq
1$. The word $u$ is a \emph{prefix} of $v$ if $uq = v$ for some $q \in
A^*$, and it is a \emph{suffix} of $v$ if $pu = v$ for some $p \in
A^*$.  We say that $u$ is a factor (resp.\ prefix, resp.\ suffix) of
$v^+$ if there exists $n \in \N$ such that $u$ is a factor (resp.\
prefix, resp.\ suffix) of $v^n$. Two words $u,v \in A^*$ are
\emph{conjugate} if there exist $p,q \in A^*$ such that $u = pq$ and
$v = qp$. An integer $m > 0$ is a \emph{period} of a word $u = a_1
\cdots a_n$ with $a_i \in A$ if $a_i = a_{i+m}$ for all $1 \leq i \leq
n-m$. A word $u \in A^+$ is \emph{primitive} if there exists no $v \in
A^+$ such that $u = v^n$ for some integer $n > 1$. It is a standard
fact that a word $u$ is not primitive if and only $u^2 = puq$ for some
$p,q \in A^+$. This follows immediately from the result from
combinatorics on words that $xy = yx$ \IFF $x$ and $y$ are powers of a
common root; see e.g.~\cite[Section~1.3]{lot83}.

A monoid $M$ \emph{recognizes} a language $L \subseteq A^*$ if there
exists a homomorphism $\varphi : A^* \to M$ such that $L =
\varphi^{-1} \varphi(L)$.  A language $L \subseteq A^*$ is
\emph{regular} if it is recognized by a finite monoid.  There are
various other 
and well-known characterizations of regular languages; e.g., regular
expressions, finite automata or monadic second order logic.  Regular
languages $L$ can be classified in terms of structural properties of
the monoids recognizing $L$. In particular, we consider group
languages; these are languages recognized by finite groups.

\paragraph{Semi-Thue systems}

A {\em semi-Thue system} over $A$ is a subset $S\sse A^*\times
A^*$. In this paper, all semi-Thue systems are finite. The elements of
$S$ are called {\em rules}.  We frequently write $\ell \to r$ for
rules $(\ell,r)$.  A system $S$ is called {\em \lr} if we have $\abs
\ell > \abs r $ for all rules $\ell \to r$ in $S$. It is called
\emph{weight-reducing} 
with respect to some weighted alphabet $(A,\Abs{\cdot})$, if
$\Abs{\ell} > \Abs{r}$ for all rules $\ell \to r$ in $S$.  Every
system $S$ defines the rewriting relation $\RA{S} \sse A^* \times A^*$
by setting $u \RA{S} v$ if there exist $p,q,\ell,r \in A^*$ such that
$u = p\ell q$, $v= prq$, and $\ell \to r$ is in $S$.
 
By $\RAS*{S}$ we mean the reflexive and transitive closure of
$\RA{S}$.  By $\DAS*{S}$ we mean the symmetric, reflexive, and
transitive closure of $\RA{S}$. We also write $u \LAS*{S}v$ whenever
$v\RAS*{S}u$.  The system $S$ is {\em confluent} if for all
$u\DAS*{S}v$ there is some $w$ such that $u\RAS*{S}w\LAS*{S}v$.  It is
\emph{locally confluent} if for all $v \LA{S} u \RA{S} v'$ there
exists $w$ such that $v \RAS*{S} w \LAS*{S} v'$. If $S$ is locally
confluent and weight-reducing for some weight, then $S$ is confluent;
see e.g.~\cite{bo93springer,jan88eatcs}.  Note that $u \RAS{}S v$ implies that
$\Abs{u} > \Abs{v}$ for weight-reducing systems.  The relation
${\DAS*{S}} \sse A^* \times A^*$ is a congruence, hence the congruence
classes $[u]_S = \{v \in A^*\mid u \DAS*S v\}$ form a monoid which is
denoted by $A^*/S $. The size of $A^*/S$ is called the \emph{index of
  $S$}.  A finite semi-Thue system $S$ can be viewed as a finite set
of defining relations. Hence, $A^*/S $ becomes a finitely presented
monoid.  By $\IRR_S(A^*)$ we denote the set of irreducible words in
$A^*$, i.e., the set of words where no left-hand side occurs as a
factor.

Whenever the weighted alphabet $(A,\Abs{\cdot})$ is fixed, a finite
semi-Thue system $S \subseteq A^* \times A^*$ is called a
\emph{weighted Church-Rosser system} if it is finite, weight-reducing
for $(A,\Abs{\cdot})$, and confluent.  Hence, a finite semi-Thue
system $S$ is a weighted Church-Rosser system \IFF (1) we have $\Abs
\ell > \Abs r$ for all rules $\ell \to r$ in $S$ and (2) every
congruence class has exactly one irreducible element.  In particular,
for weighted Church-Rosser systems $S$, there is a one-to-one
correspondence between $A^* / S$ and $\IRR_S(A^*)$.
A \emph{Church-Rosser system} is a finite, length-reducing, and
confluent semi-Thue system. In particular, every Church-Rosser system
is a weighted Church-Rosser system.  A language $L \sse A^*$ is called
a {\em Church-Rosser congruential language} if there is a finite
Church-Rosser system $S$ such that $L$ can be written as a finite
union of congruence classes $[u]_S$.

\begin{definition}\label{def:cr}
  Let $\phi: A^* \to M$ be a \homo and let $S$ be a semi-Thue
  system. We say that \emph{$\phi$ factorizes through $S$} if for all
  $u,v \in A^*$ we have:
  \begin{equation*}
    u \DAS*{S} v \quad \text{implies} \quad \varphi(u) = \varphi(v).
  \end{equation*}
\end{definition}

Note that if $S$ is a semi-Thue system and $\phi : A^* \to M$
factorizes through $S$, then the following diagram commutes:
\begin{center}
  \begin{tikzpicture}[scale=0.75]
    \draw (0,0) node (A) {$A^*$};
    \draw (3,0) node (M) {$M$};
    \draw (3,2) node (S) {$A^* / S$};
    \draw[->] (A) -- node[pos=0.65,above] {$\varphi$} (M);
    \draw[->] (A) -- node[above left,outer sep=-2pt] {$\pi$} (S);
    \draw[->] (S) -- node[right] {$\psi$} (M);
  \end{tikzpicture}
\end{center}
Here, $\pi(u) = [u]_S$ is the canonical homomorphism and $\psi([u]_S)
= \varphi(u)$.

\section{Finite Groups}\label{sec:groups}

Our main result is that every homomorphism $\varphi : A^* \to M$ to
finite monoid $M$ factorizes through a  Church-Rosser system
$S$. Our proof of this theorem distinguishes whether or not $M$
is a group. Thus, we first prove this result for groups. Before we
turn to the general case, we show that for some particular groups,
proving the claim is easy. The techniques developed here will also be
used when proving the result for arbitrary finite groups.

\subsection{Groups without proper cyclic quotient groups}\label{sec:simple}

The aim of this section is to show that finding a Church-Rosser
system is very easy for many cases. This list includes
systems of all finite (non-cyclic) simple groups, but it goes far
beyond this. Let $\phi: A^* \to G$ be a homomorphism to a finite
group, where $(A,\Abs{\cdot})$ is a weighted alphabet. This defines a
regular language $L_G = \set{w \in A^*}{\phi(w) = 1}$. Let us assume
that the greatest common divisor $\gcd{\set{\Abs{w}}{w \in L_G}}$ is
equal to one; e.g.{} $\oneset{6,10,15} \sse \set{\Abs{w}}{w \in
  L_G}$. Then there are two words $u,v \in L_G$ such that $\Abs{u}
-\Abs{v} = 1$. Now we can use these words to find a constant $d$ such
that all $g\in G$ have a representing word $v_g$ with the exact weight
$\Abs{v_g} = d$. To see this, start with some arbitrary set of
representing words $v_g$. We multiply words $v_g$ with smaller weight
with $u$ and words $v_g$ higher weights with $v$ until all weights are
equal.

The final step is to define the following weight-reducing system
$$S_G= \set{w \to v_{\phi(w)}} {w \in A^{*}\;  \text{ and } d < \Abs w \leq d + \max\set{\Abs a}{a \in A}}.$$
Confluence of $S_G$ is trivial; and every language recognized by
$\phi$ is also recognized by the canonical \homo $A^* \to A^* / S_G$.

Now assume that we are not so lucky, i.e., $\gcd{\set{\Abs{w}}{w \in
    L_G}} >1$.  This means there is a prime number $p$ such that $p$
divides $\Abs{w}$ for all $w \in L_G$. Then, the \homo of $A^*$ to
$\Z/p\Z$ defined by $a \mapsto \Abs{a} \bmod p$ factorizes through
$\phi$ and $\Z/p\Z$ becomes a quotient group of $G$.  This can never
happen if $G$ is simple and non-cyclic, because a simple group does
not have any proper quotient group. But there are many other cases
where a natural homomorphism $A^* \to G$ for some weighted alphabet
$(A,\Abs{\cdot})$ satisfies the property $\gcd{\set{\Abs{w}}{w \in
    L_G}= 1}$ although $G$ has a non-trivial cyclic quotient group.
Just consider the length function and a presentation by standard
generators for dihedral groups $D_{2n}$ or the permutation groups
$\mathcal{S}_n$ where $n$ is odd.

For example, let $G = D_6 = \mathcal{S}_3$ be the permutation group of
a triangle.  Then~$G$ is generated by elements $\tau$ and $\rho$ with
defining relations
\begin{equation*}
  \tau^2 = \rho^3 = 1 \text { and } \tau \rho \tau = \rho^2.
\end{equation*}
The following six words of length $3$ represent all six group
elements: 
\begin{equation*}
  1=\rho^3,\; \rho=\rho\tau^2,\; \rho^2 = \tau \rho \tau,\; \tau=\tau^3,\;
  \tau\rho = \rho^2\tau,\;  \tau\rho^2.
%
\end{equation*}
The corresponding monoid $\smallset{\rho,\tau}^* / S_G$ has $15$ elements.

It is much harder to find a Church-Rosser system for the \homo
$\phi: \smallset{a,b,c}^* \to \Z / 3 \Z$ where $\phi(a) = \phi(b) =
\phi(c) = 1 \bmod 3$.  In some sense this phenomenon suggests that 
finite cyclic groups or more general commutative groups are the
obstacle to find a simple construction for Church-Rosser
systems.

\subsection{The general case for group languages}

In this section, we consider arbitrary groups. We start with some
simple properties of Church-Rosser systems. Then, in
\refthm{thm:group}, we state and prove that group languages are
Church-Rosser congruential.


\begin{lemma}\label{lem:irr}
  Let $(A,\Abs{\cdot})$ be a weighted alphabet, let $d \in \N$, and
  let $S \subseteq A^* \times A^*$ be a weighted Church-Rosser system
  such that $\IRR_S(A^*)$ is finite. Then
  \begin{equation*}
    S_d = \set{u \ell v \to u r v}{u,v \in A^d \text{ and } \ell \to r \in S}
  \end{equation*}
  is a weighted Church-Rosser system satisfying:
  \begin{enumerate}
  \item\label{irri} $\IRR_{S_d}(A^*)$ is finite. 
  \item\label{irrii} All words of length at most $2d$ are irreducible
    with respect to $S_d$.
  \item\label{irriii} The mapping $[u]_{S_d} \mapsto [u]_S$ for $u \in
    A^*$ is well-defined and yields a surjective homomorphism from
    $A^* / S_d$ onto $A^* / S$.
  \end{enumerate}
\end{lemma}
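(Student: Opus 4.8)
The plan is to verify the three assertions in turn, treating confluence as the only real difficulty. First I would strip the padding off the rewriting relation: applying a rule $u\ell v \to urv$ of $S_d$ merely replaces the factor $\ell$ by $r$, with $u,v \in A^d$ serving as context. Hence $w \RA{S_d} w'$ holds \IFF $w = x\ell y$ and $w' = xry$ for some $\ell \to r \in S$ with $\abs x \geq d$ and $\abs y \geq d$; that is, $S_d$ performs exactly those $S$-reductions whose redex has at least $d$ letters of context on each side. This immediately gives $\RA{S_d} \sse \RA{S}$, and since every rule of $S$ satisfies $\Abs\ell > \Abs r \geq 0$ and hence $\ell \neq 1$, every $S_d$-redex $x\ell y$ has length at least $2d+1$; so every word of length at most $2d$ is irreducible, which is property~\ref{irrii}. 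Finiteness of $S_d$ is clear from finiteness of $A$ and $S$, and $\Abs{u\ell v} = \Abs u + \Abs\ell + \Abs v > \Abs u + \Abs r + \Abs v = \Abs{urv}$ shows that $S_d$ is weight-reducing.

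The heart of the matter is confluence, and the key observation I would use is that $S_d$ freezes the borders of a word. Since every $S_d$-redex starts after position $d$ and ends before the last $d$ positions, a single step never touches the first $d$ or the last $d$ letters, so these two blocks are invariant under $\RAS*{S_d}$. Writing any word of length at least $2d$ as $w = PmQ$ with $\abs P = \abs Q = d$, I would show that the steps $w \RA{S_d} Pm'Q$ correspond exactly to the steps $m \RA{S} m'$, because a left-hand side occurs as a factor of the middle $m$ \IFF it occurs in $w$ with at least $d$ letters of context on both sides. Thus $S_d$ on words $PmQ$ is nothing but $S$ on the middle $m$ with frozen borders. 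Given a fork $v \LA{S_d} w \RA{S_d} v'$, one has $w = PmQ$ and $v = Pm_1Q$, $v' = Pm_2Q$ with $m \RA{S} m_1$ and $m \RA{S} m_2$; confluence of $S$ supplies some $m_3$ with $m_1 \RAS*{S} m_3 \LAS*{S} m_2$, and lifting these reductions back yields $v \RAS*{S_d} Pm_3Q \LAS*{S_d} v'$. Hence $S_d$ is locally confluent, and being weight-reducing it is confluent, so $S_d$ is a weighted Church-Rosser system. The main obstacle I am deliberately sidestepping is a direct critical-pair analysis: because $S$ is only weight-reducing and not length-reducing, reducing one redex may shrink a neighbouring redex's context below $d$; the border trick, which reduces inside the frozen middle via the confluence of $S$, is what makes the argument go through cleanly.

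For property~\ref{irri} I would let $N$ be the maximal length of a word in the finite set $\IRR_S(A^*)$ and bound the irreducible words. An $S_d$-irreducible $w$ with $\abs w \geq 2d$ decomposes as $w = PmQ$, and its middle $m$ must be $S$-irreducible, since any occurrence of a left-hand side inside $m$ would be an $S_d$-redex in $w$; hence $\abs m \leq N$ and $\abs w \leq N + 2d$, while shorter words satisfy $\abs w < 2d \leq N + 2d$ anyway. As $A$ is finite there are only finitely many words of length at most $N + 2d$, so $\IRR_{S_d}(A^*)$ is finite. Finally, for property~\ref{irriii}, the inclusion $\RA{S_d} \sse \RA{S}$ yields $\DAS*{S_d} \sse \DAS*{S}$, so $u \DAS*{S_d} v$ implies $[u]_S = [v]_S$ and the map $[u]_{S_d} \mapsto [u]_S$ is well defined; being induced by the identity on $A^*$ it is a homomorphism, and it is surjective because every class $[u]_S$ is the image of $[u]_{S_d}$.
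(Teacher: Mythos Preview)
Your proof is correct and follows essentially the same approach as the paper. The paper's argument is extremely terse---it merely asserts that local confluence of $S$ transfers to $S_d$ and states the identity $\IRR_{S_d}(A^*) = A^{\leq 2d} \cup A^d \cdot \IRR_S(A^*) \cdot A^d$---and your ``border trick'' decomposition $w = PmQ$ is precisely the mechanism that makes both claims evident, so you have filled in exactly the details the paper omits.
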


\begin{proof}
  First, one shows that local confluence of $S$ transfers to local
  confluence of $S_d$. For ``\ref{irri}'' and ``\ref{irrii}'' note that
  $\IRR_{S_d}(A^*) = A^{\leq 2d} \cup A^d \cdot \IRR_S(A^*) \cdot A^d$.
  The remaining proof is straightforward and therefore left to the
  reader.
  %
\end{proof}

\begin{lemma}\label{lem:econfl}
  Let $(A,\Abs{\cdot})$ be a weighted alphabet and let $\Delta
  \subseteq A^+$ such that all words in $\Delta$ have length at most
  $t$. Then, for every $n \geq 1$, the set of rules
  \begin{equation*}
    T = \set{\delta^{t+n} \to \delta^t}{%
      \delta \in \Delta,\; \delta \text{ is primitive}}
  \end{equation*}
  yields a weighted Church-Rosser system.
\end{lemma}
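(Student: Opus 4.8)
The plan is to verify, one by one, the three defining properties of a weighted Church-Rosser system. Finiteness is immediate: every word in $\Delta$ has length at most $t$ and $A$ is finite, so $\Delta$ is finite and hence so is $T$. The system is weight-reducing because each rule $\delta^{t+n} \to \delta^t$ satisfies $\Abs{\delta^{t+n}} = (t+n)\Abs{\delta} > t\,\Abs{\delta} = \Abs{\delta^t}$, using that $\delta \in A^+$ forces $\Abs{\delta} > 0$ and that $n \geq 1$. By the criterion recalled in \refsec{sec:prelim}, a weight-reducing system is confluent as soon as it is locally confluent, so it remains to prove local confluence. Two one-step reductions at disjoint positions commute, so I only need to resolve the critical pairs, i.e.\ overlapping occurrences of left-hand sides.

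First I would rule out inclusions. If $\gamma^{t+n}$ were a factor of $\delta^{t+n}$ with $\delta \neq \gamma$, then $\gamma^{t+n}$ would be a factor of $\delta^\omega$ carrying the two periods $\abs{\delta}$ and $\abs{\gamma}$; as its length $(t+n)\abs{\gamma}$ exceeds $\abs{\delta}+\abs{\gamma}$, the periodicity theorem of Fine and Wilf together with the primitivity of $\delta$ and $\gamma$ (via the characterization $u^2 = puq$ recalled in \refsec{sec:prelim}, see \cite{lot83}) forces $\delta = \gamma$, a contradiction. Hence only proper overlaps occur, and, up to exchanging the two rules, such an overlap is a word $w = P\,O\,Q$ with $P\,O = \delta^{t+n}$ and $O\,Q = \gamma^{t+n}$, where $O$ is the shared factor of length $L = \abs{O} \geq 1$ (the two rules are allowed to coincide). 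Reducing the left occurrence gives $w \RA{T} \delta^t Q$ and reducing the right one gives $w \RA{T} P\gamma^t$, and I must exhibit a common successor of these two words.

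The resolution splits according to the size of $L$. If $L \geq \abs{\delta} + \abs{\gamma} - \gcd(\abs{\delta},\abs{\gamma})$, then $O$ has the two periods $\abs{\delta}$ and $\abs{\gamma}$ and is long enough for Fine and Wilf's theorem to give it period $\gcd(\abs{\delta},\abs{\gamma})$; since $\delta$ and $\gamma$ are primitive this forces $\abs{\delta} = \abs{\gamma} = \gcd(\abs{\delta},\abs{\gamma}) \eqcolon \ell$, and an overlap of length at least $\ell$ makes $w$ itself have period $\ell$. Both reductions then delete a block of $n\ell$ consecutive letters from the periodic word $w$, and any such deletion yields the same word, so the two one-step reducts are already equal. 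If instead $L < \abs{\delta} + \abs{\gamma} - \gcd(\abs{\delta},\abs{\gamma})$, a short computation using $\abs{\delta},\abs{\gamma}\le t$ (concretely $(t-1)(\abs{\delta}-1)\ge 0$) shows $L \leq t\abs{\delta}$ and $L \leq t\abs{\gamma}$. Then $O$, the length-$L$ suffix of $\delta^{t+n}$, is also the length-$L$ suffix of $\delta^t$, so $\delta^t = P'O$; likewise $O$ is the length-$L$ prefix of $\gamma^t$, so $\gamma^t = O Q'$. Writing $\delta^t Q = P'(OQ) = P'\gamma^{t+n}$ and $P\gamma^t = (PO)Q' = \delta^{t+n}Q'$, one further reduction on each side gives $\delta^t Q \RA{T} P'\gamma^t = P'OQ'$ and $P\gamma^t \RA{T} \delta^t Q' = P'OQ'$, so both reducts meet at $P'OQ'$.

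The part I expect to be the main obstacle is precisely the combinatorics on words underpinning this case analysis: the correct invocation of Fine and Wilf's theorem and of the primitivity characterization to eliminate inclusions and to pin down $\abs{\delta}=\abs{\gamma}$ in the long-overlap case, and, in the short-overlap case, the verification that the shared factor $O$ is genuinely preserved by both reductions, which rests on the inequalities $L \le t\abs{\delta}$ and $L \le t\abs{\gamma}$. Once these length bounds are secured, the identity $P'OQ'$ resolving every proper overlap is a direct calculation, and local confluence — hence, by the weight-reducing property, confluence — follows.
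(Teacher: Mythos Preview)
Your proof is correct and shares the paper's overall strategy: reduce to local confluence and analyze overlaps of two left-hand sides, splitting into a ``short overlap'' case (where each rule can still be applied after the other, and the reducts rejoin) and a ``long overlap'' case (where one shows $\abs{\delta}=\abs{\gamma}$, so the two one-step reducts already coincide). The difference lies in how the long-overlap case is dispatched and where the threshold is placed. You cut at the Fine--Wilf bound $\abs{\delta}+\abs{\gamma}-\gcd(\abs{\delta},\abs{\gamma})$ and let Fine--Wilf plus primitivity immediately force $\abs{\delta}=\abs{\gamma}$. The paper instead cuts at $t\cdot\min(\abs{\delta},|\tilde\delta|)$; above that threshold it shows by hand that $|\tilde\delta|$ is a period of $\delta$, treats the case $|\tilde\delta|=2$ separately, and then uses the $u^2=puq$ characterization of non-primitive words to rule out $\abs{\delta}>|\tilde\delta|$. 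Your route is more uniform---one invocation of Fine--Wilf replaces the ad hoc case analysis---while the paper's route stays with the elementary $u^2=puq$ criterion already recalled in \refsec{sec:prelim} and never needs Fine--Wilf. Your explicit treatment of the inclusion case and your computation $(t-1)(\abs{\delta}-1)\ge 0$ bounding $L\le t\abs{\delta}$ and $L\le t\abs{\gamma}$ are the right ingredients for the short-overlap branch; the paper gets the analogous independence directly from its choice of threshold.
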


\begin{proof}
  Every rule in $T$ is weight-reducing. Thus it suffices to show that
  $T$ is locally confluent. Let $\delta, \tilde{\delta} \in \Delta$ be
  primitive with $\abs{\delta} \geq |\tilde{\delta}|$ and suppose $x
  \delta^{t+n} = \tilde{\delta}^{t+n} y$. If $\delta^{t+n}$ is a
  suffix of $\tilde{\delta}^{t} y$, then $\tilde{\delta}^{t+n}$ is a
  prefix of $x \delta^t$; and the two $T$-rules $\delta^{t+n} \to
  \delta^t$ and $\tilde{\delta}^{t+n} \to \tilde{\delta}^t$ can be
  applied independently of one another. Thus we can assume
  $\abs{\delta^{t+n}} > |\tilde{\delta}^{t} y|$. In particular,
  $\tilde{\delta}^t$ is a factor of $\delta^+$. Note that
  $|\tilde{\delta}^t| \geq \abs{\delta}$. Thus $|\tilde{\delta}|$ is
  a period of $\delta$.

  Let us first consider the case $\abs{\delta} > |\tilde{\delta}|$.
  Since $\delta$ is primitive, $|\tilde{\delta}|$ cannot be a divisor
  of $\abs{\delta}$. In particular, we have $|\tilde{\delta}| \geq 2$.
  Suppose $|\tilde{\delta}| = 2$. Then $\delta = (ab)^ma$ for $a,b \in
  A$ and some $m \geq 1$. We conclude that the suffix $a\delta$ or the
  prefix $\delta a$ of $\delta^2$ is a factor of
  $\tilde{\delta}^+$. Since both words $a\delta$ and $\delta a$ have a
  factor $aa$ and $|\tilde{\delta}| = 2$, this contradicts
  $\tilde{\delta}$ being primitive. Therefore, we can assume
  $|\tilde{\delta}| \geq 3$ and hence, $|\tilde{\delta}^t| \geq
  \abs{\delta^3}$. It follows that $\delta^2$ is a factor of
  $\tilde{\delta}^+$ and $|\tilde{\delta}|$ is a period of $\delta^2$.
  By shifting the prefix $\delta$ of $\delta^2$ by this period, we can
  write $\delta^2 = p \delta q$ with $p,q \in A^+$ and $\abs{p} =
  |\tilde{\delta}|$.  We conclude that $\delta$ is not primitive,
  which is a contradiction.

  Let now $\abs{\delta} = |\tilde{\delta}|$. In this case, the words
  $\delta$ and $\tilde{\delta}$ are conjugate.  Therefore, applying
  one of the rules $\delta^{t+n} \to \delta^t$ and
  $\tilde{\delta}^{t+n} \to \tilde{\delta}^t$ yields the same word. 
  %
\end{proof}
\begin{lemma}\label{lem:efin}
  Let $\Delta \subseteq A^+$ be a set of words such that all words in
  $\Delta$ have length at most $n$. If $u \in A^{> 2n}$ is not a
  factor of some $\delta^+$ for $\delta \in \Delta$, then there is a
  proper factor $v$ of $u$ which is also not a factor of some
  $\delta^+$ for $\delta \in \Delta$.
\end{lemma}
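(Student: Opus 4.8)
The plan is to argue by contraposition: I assume that \emph{every} proper factor of $u$ is a factor of some $\delta^+$ with $\delta \in \Delta$, and I derive that $u$ itself is a factor of some $\delta^+$, contradicting the hypothesis on $u$. Write $u = a_1 \cdots a_m$ with $a_i \in A$ and $m = \abs{u} > 2n$, and consider the two proper factors obtained by deleting one letter at an end, namely $u' = a_1 \cdots a_{m-1}$ and $u'' = a_2 \cdots a_m$. By the contraposition assumption, $u'$ is a factor of $\delta_1^+$ and $u''$ is a factor of $\delta_2^+$ for suitable $\delta_1, \delta_2 \in \Delta$. Since any factor of $\delta^+$ inherits the period $\abs{\delta}$ from the $\abs{\delta}$-periodic words $\delta^k$, the word $u'$ has period $p_1 = \abs{\delta_1} \le n$ and $u''$ has period $p_2 = \abs{\delta_2} \le n$.

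First I record the letter identity I ultimately need. From the period $p_1$ of $u' = a_1 \cdots a_{m-1}$, the longer word $u = u' a_m$ is again a factor of $\delta_1^+$ as soon as $a_m$ agrees with the letter predicted by this period, that is, as soon as $a_m = a_{m-p_1}$; here the index $m - p_1$ lies in the range $2 \le m - p_1 \le m-1$ because $m - p_1 \ge (2n+1) - n = n+1 \ge 2$. On the other side, the period $p_2$ of $u''$ gives directly $a_m = a_{m-p_2}$, the index $m - p_2$ being $\ge 2$ for the same reason. Hence the entire problem reduces to proving that $a_{m-p_1} = a_{m-p_2}$.

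The heart of the argument is the overlap $w = a_2 \cdots a_{m-1}$, which is a factor of both $u'$ and $u''$ and therefore carries \emph{both} periods $p_1$ and $p_2$. Its length satisfies $\abs{w} = m - 2 \ge 2n - 1 \ge p_1 + p_2 - \gcd(p_1,p_2)$, where the last inequality uses $p_1, p_2 \le n$ and $\gcd(p_1,p_2) \ge 1$; so the theorem of Fine and Wilf (the periodicity lemma, see e.g.~\cite[Section~1.3]{lot83}) applies and shows that $w$ has period $g = \gcd(p_1,p_2)$. Both indices $m - p_1$ and $m - p_2$ lie inside the range $[2, m-1]$ of $w$, and $m - p_1 \equiv m \equiv m - p_2 \pmod{g}$ because $g$ divides both $p_1$ and $p_2$; since $w$ has period $g$, positions congruent modulo $g$ carry equal letters, whence $a_{m-p_1} = a_{m-p_2}$. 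Combining the three identities yields $a_m = a_{m-p_2} = a_{m-p_1}$, so $u = u' a_m$ is a factor of $\delta_1^+$, the desired contradiction.

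I expect the main obstacle to be precisely the step that fuses the two periodicities: neither $u'$ nor $u''$ on its own pins down the missing letter, and it is the Fine--Wilf bound applied to the common factor $w$ that forces the two predicted letters to agree. The hypothesis $\abs{u} > 2n$ is exactly what guarantees $\abs{w} \ge p_1 + p_2 - \gcd(p_1,p_2)$ for all admissible periods $p_1, p_2 \le n$, so the threshold $2n$ is dictated by the periodicity lemma rather than being incidental.
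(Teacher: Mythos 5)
Your proof is correct and essentially identical to the paper's: the paper likewise writes $u = awb$, notes that the proper factors $aw$ and $wb$ carry periods $p = \abs{\delta}$ and $q = \abs{\delta'}$, applies Fine--Wilf to the common middle part $w$ (using $\abs{w} \geq 2n-1 \geq p + q - \gcd(p,q)$), and transfers the boundary letter across the two periods. The only difference is the direction of the final extension step --- the paper prepends $a$ to $wb$ to exhibit $u$ as a factor of $\delta'^+$, while you append $a_m$ to $u'$ to exhibit $u$ as a factor of $\delta_1^+$ --- which is a mirror image of the same argument.
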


\begin{proof}
  Assume that such a factor $v$ of $u$ does not exist. Let $u = a w b$
  for $a,b \in A$. Then $aw$ is a factor of $\delta^+$ and $wb$ is a
  factor of $\delta'^+$ for some $\delta,\delta' \in \Delta$. Let $p =
  \abs{\delta}$ and $q = \abs{\delta'}$. Now, $p$ is a period of $aw$
  and $q$ is a period of $wb$. Thus $p$ and $q$ are both periods of
  $w$.  Since $\abs{w} \geq 2n-1 \geq p + q - \gcd(p,q)$, we see that
  $\gcd(p,q)$ is also a period of $w$ by the Periodicity Lemma of Fine
  and Wilf~\cite[Section~1.3]{lot83}. The $(p+1)$-th letter in $aw$ is $a$.
  Going in steps $\gcd(p,q)$ to the left or to the right in $w$, we see
  that the $(q+1)$-th letter in $aw$ is $a$. Thus $awb$ is
  a factor of $\delta'^+$, which is a contraction.
\end{proof}

We are now ready to prove the main result of this section: Group
languages are Church-Rosser congruential. An outline of the proof is
as follows. By induction on the size of the alphabet, we show that
every homomorphism $\varphi : A^* \to G$ factorizes through a weighted
Church-Rosser system $S$ with finite index. Remove some letter $c$
from the alphabet $A$. This leads to a system $R$ for the remaining
letters $B$. \reflem{lem:irr} allows to assume that certain words are
irreducible. Then we consider $K = \IRR_R(B^*) c$ which is a prefix
code in $A^*$. We consider $K$ as a new alphabet. Essentially, it is
this situation where weighted alphabets come into play because we can
choose the weight of $K$ such that it is compatible with the weight
over the alphabet $A$. Over $K$, we introduce two sets of rules
$T_\Delta$ and $T_\Omega$. The $T_\Delta$-rules reduce long
repetitions of short words $\Delta$, and the $T_\Omega$-rules have the
form $\omega \, u \, \omega \to \omega \, v_g \, \omega$. Here,
$\Omega$ is some finite set of markers and $\omega \in \Omega$ is such
a marker.  The word $v_g$ is a normal form for the group
element~$g$. The $T_\Omega$-rules reduce long words without long
repetitions of short words.  Then we show that $T_\Delta$ and
$T_\Omega$ are confluent and that their union has finite index
over~$K^*$. Here, the confluence of the $T_\Delta$-rules is
\reflem{lem:econfl}.  The confluence of the $T_\Omega$-rules relies on
several combinatorial properties of the normal forms $v_g$ and the
markers $\Omega$. Using \reflem{lem:efin}, we see that all
sufficiently long words are reducible.  Since by construction all
rules in $T = T_\Delta \cup T_\Omega$ are weight-reducing, the system
$T$ is a weighted Church-Rosser system over~$K^*$ with finite
index such $\varphi : K^* \to G$ factorizes through $T$. Since $K
\subseteq A^*$, we can translate the rules $\ell \to r$ in $T$ over
$K^*$ to rules $c \ell \to c r$ over $A^*$. This leads to the set of
$T'$-rules over $A^*$. The letter $c$ at the beginning of the
$T'$-rules is require to shield from $R$-rules. Finally, we show that
$S = R \cup T'$ is the desired system over~$A^*$.

\begin{theorem}\label{thm:group}
  Let $(A,\Abs{\cdot})$ be a weighted alphabet and let $\varphi : A^*
  \to G$ be a homomorphism to a finite group $G$. Then there exists a
  weighted Church-Rosser system $S$ with finite index such that
  $\varphi$ factorizes through $S$.
\end{theorem}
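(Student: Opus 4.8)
The plan is to induct on the size of the alphabet $A$. For the base case $\abs{A} = 1$, say $A = \smallset{c}$, the image $\varphi(c)$ generates a finite cyclic subgroup of $G$, so the sequence $\varphi(c^n)$ is periodic from the start; choosing a period $n$ equal to the order of $\varphi(c)$ and any threshold $t \geq 1$, the single weight-reducing rule $c^{t+n} \to c^t$ (trivially confluent, as in \reflem{lem:econfl}) preserves $\varphi$, has finitely many irreducible words $c^0, \dots, c^{t+n-1}$, and hence finite index.

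For the inductive step I would fix a letter $c \in A$ and set $B = A \setminus \smallset{c}$. By induction applied to $\varphi$ restricted to $B^*$, there is a weighted Church-Rosser system $R$ over $B^*$ with finite index through which $\varphi|_{B^*}$ factorizes. Using \reflem{lem:irr} I replace $R$ by a system $R_d$ so that $\IRR_R(B^*)$ is finite and all short words over $B$ are irreducible; this is needed to control overlaps later. The central idea is then to regard $K = \IRR_R(B^*)\, c$ as a new finite alphabet: it is a prefix code in $A^*$, and after $R$-reducing the maximal $c$-free blocks of any word of $A^*$ one obtains a word over $K$ followed by an irreducible tail over $B^*$. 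Assigning to each codeword $uc \in K$ the weight $\Abs{u} + \Abs{c}$ makes weights over $K^*$ agree with weights over $A^*$. This reduces the problem over $A$ to building a weighted Church-Rosser system over the larger but structurally simpler alphabet $K$, on which $\varphi$ still maps into the group $G$.

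Over $K^*$ I would introduce two families of rules. The first, $T_\Delta$, collapses long powers of short primitive words via rules $\delta^{t+n} \to \delta^t$; their confluence is exactly \reflem{lem:econfl}. The second, $T_\Omega$, handles long words containing no such long power: using a finite set of markers $\Omega$, I rewrite $\omega\, u\, \omega \to \omega\, v_g\, \omega$, where $v_g$ is a fixed normal form for the group element $g = \varphi(u)$, so that $\varphi$ is preserved (the flanking markers cancel because $G$ is a group). \reflem{lem:efin} then guarantees that every sufficiently long word over $K$ either contains a long power of a short word (reducible by $T_\Delta$) or is reducible by $T_\Omega$, which yields $\IRR_T(K^*)$ finite and hence finite index. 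I would verify that $T = T_\Delta \cup T_\Omega$ is confluent and that $\varphi : K^* \to G$ factorizes through $T$.

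Finally I would pull the system back to $A^*$: each rule $\ell \to r$ of $T$ over $K^*$ becomes a rule $c\ell \to cr$ over $A^*$, forming $T'$; the leading $c$ shields these rules from interacting with the $R$-rules. I would conclude by checking that $S = R \cup T'$ is weighted Church-Rosser with finite index and that $\varphi$ factorizes through it. The main obstacle is the confluence of the $T_\Omega$-rules: the markers $\Omega$ and the normal forms $v_g$ must be chosen so that the many possible overlaps of two marker-rules, as well as overlaps of marker-rules with $T_\Delta$-rules, all resolve to a common descendant. Getting these combinatorial properties of $\Omega$ and $v_g$ right — and then ensuring that the shielded pullback $S = R \cup T'$ stays confluent — is where the real work lies; the remaining verifications are routine given \reflem{lem:irr}, \reflem{lem:econfl}, and \reflem{lem:efin}.
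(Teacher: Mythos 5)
Your architecture coincides with the paper's: induction on $\abs{A}$, splitting off a letter $c$, the prefix code $K = \IRR_R(B^*)\,c$ viewed as a weighted alphabet, the two rule families $T_\Delta$ and $T_\Omega$, and the $c$-guarded pullback $S = R \cup T'$. The problem is that this is only the outline, and the step you explicitly defer --- choosing the markers $\Omega$ and the normal forms $v_g$ so that all overlaps of $T_\Omega$-rules resolve --- is not a verification one can postpone as ``the real work'': it \emph{is} the theorem. As written, your proposal gives no construction of the $v_g$, no definition of $\Omega$, and no confluence argument for $T_\Omega$, so nothing is established beyond what \reflem{lem:irr}, \reflem{lem:econfl} and \reflem{lem:efin} already say.

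Concretely, three interlocking choices are missing. (i) The normal forms: the paper builds them from the letters $\gamma_i = a_{i \bmod s}^{n+\floor{i/s}}\,c$ ($n$ the exponent of $G$) as $v_g = \gamma_0^{n_0}\cdots\gamma_m^{n_m}\gamma_0$, with extended letters in non-decreasing order, common suffix $\gamma_m\gamma_0$, and --- by pumping with the neutral-image words $\gamma_0^n$ and $\gamma_s^n$ --- weights that pairwise differ by less than $n\Abs{a_0}$. This near-equality of weights is the real dividend of the group structure; your remark that ``the flanking markers cancel because $G$ is a group'' is beside the point, since $\varphi$-invariance of the rules is automatic from $\varphi(v_{\varphi(u)})=\varphi(u)$. (ii) The markers: $\Omega$ is not just any finite unavoidable set; it is carved out of the minimal set $J$ with $K^*JK^* = K^*\setminus F$ by excluding every word of $\Gamma K^*$ except decreasing pairs $\gamma\gamma'$, and it carries a linear order $\preceq$ in which $\gamma_m\gamma_0$ is least and $\Omega\cap K^+\gamma_0$ precedes everything else. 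Exactly these choices yield \refclm{clm:oovoo} --- rewriting $\omega u \omega \to \omega v_g \omega$ never creates a marker $\succ \omega$ --- which is what makes the priority restriction defining $T_\Omega$ consistent; note also that since $\Omega$ is a proper subset of $J$, \reflem{lem:efin} alone does not show that long $T_\Delta$-irreducible words are $T_\Omega$-reducible (that needs the pigeon-hole argument of \refclm{clm:long:oo}). (iii) The overlap analysis: if $x\,\omega u\,\omega = \omega\, u'\omega\, y$, then $x\omega = \omega x'$ forces $\Abs{x} > n\Abs{a_0}$ (otherwise $\omega$ would be a factor of some $\delta^+$, $\delta\in\Delta$, contradicting $\omega \in J$), and then $\Abs{x'v_g} > \Abs{v_{g'}}$ for all $g'\in G$ precisely because the weights of the normal forms differ by less than $n\Abs{a_0}$, so both sides can be pushed down to the common word $\omega\, v_{\varphi(u'')}\,\omega$. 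Until you supply (i)--(iii), or a substitute for them, the proposal is a plan rather than a proof.
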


\begin{proof}
  In the following $n$ denotes the exponent of $G$; this is the least
  positive integer $n$ such that $g^n = 1$ for all $g \in G$.  The
  proof is by induction on the size of the alphabet $A$. If $A =
  \smallset{c}$, then we set $S = \oneset{c^n \to 1}$. Let now $A =
  \oneset{a_0, \ldots, a_s, c}$ and let $a_0$ have minimal weight.  We
  set $B = A \setminus \smallset{c}$. Let
  \begin{equation*}
    \gamma_i = a_{i \bmod s}^{n + \floor{i/s}} \, c.
  \end{equation*}
  Since $A$ and $\oneset{a_0c,\ldots,a_sc, c}$ generate the same
  subgroups of $G$ and since every element $a_j c\in G$ occurs infinitely
  often as some $\gamma_i$, there exists $m > 0$ such that for every
  $g \in G$ there exists a word
  \begin{equation*}
    v_g = \gamma_0^{n_0} \cdots \gamma_m^{n_m} \gamma_0
  \end{equation*}
  with $n_i > 0$ satisfying $\varphi(v_g) = g$ and $\Abs{v_g} -
  \Abs{v_h} < n \Abs{a_0}$ for all $g,h \in G$. The latter property
  relies on $\Abs{\gamma_0} + \Abs{a_0} = \Abs{\gamma_s}$ and pumping
  with $\gamma_0^n$ and $\gamma_s^n$ which both map to the neutral
  element of $G$: Assume $\Abs{v_g} - \Abs{v_h} \geq n \Abs{a_0}$ 
  for some $g,h \in G$.  Then we do the following. All $v_g$ with
  maximal weight are multiplied by $\gamma_0^n$ on the left, and for
  all other words $v_h$ the exponent $n_s$ of $\gamma_s$ is replaced
  by $n_s + n$. After that, the maximal difference $\Abs{v_g} -
  \Abs{v_h}$ has decreased at least by $1$ (and at most by $n
  \Abs{a_0}$). 
%
We can iterate this procedure until the weights of all
  $v_g$ differ less than $n \Abs{a_0}$. Let
  \begin{equation*}
    \Gamma = \oneset{\gamma_0,\ldots,\gamma_m}
  \end{equation*}
  be the generators of the $v_g$.
  By induction there exists a weighted Church-Rosser system $R$
  for the restriction $\varphi : B^* \to G$ satisfying the statement
  of the theorem. By \reflem{lem:irr}, we can assume $\Gamma
  \subseteq \IRR_R(B^*)\,c$. Thus $v_g \in \IRR_R(A^*)$ for all $g \in
  G$. Let 
  \begin{equation*}
    K = \IRR_R(B^*)\,c. 
  \end{equation*}
  The set $K$ is a prefix code in $A^*$.  We consider $K$ as an
  extended alphabet and its elements as extended letters. The weight
  $\Abs{u}$ of $u \in K$ is its weight as a word over $A$. Each
  $\gamma_i$ is a letter in $K$. The homomorphism $\varphi : A^* \to
  G$ can be interpreted as a homomorphism $\varphi : K^* \to G$; it is
  induced by $u \mapsto \varphi(u)$ for $u \in K$.  The length
  lexicographic order on $B^*$ induces a linear order $\leq$ on
  $\IRR_R(B^*)$ and hence also on $K$. Here, we assume $a_0 < \cdots <
  a_s$.  The words $v_g$ can be read as words over the weighted
  alphabet $(K,\Abs{\cdot})$ satisfying the following five properties:
  First, $v_g$ starts with the extended letter
  $\gamma_0$. Second, the last two extended letters of $v_g$ are
  $\gamma_m \gamma_0$.  Third, all extended letters in $v_g$ are in
  non-decreasing order from left to right with respect to $\leq$, with
  the sole exception of the last letter $\gamma_0$ which is smaller
  than its predecessor $\gamma_m$.  The fourth property is that all
  extended letters in $v_g$ have a weight greater than $n \Abs{a_0}$.
  And the last important property is that all differences $\Abs{v_g} -
  \Abs{v_h}$ are smaller than $n \Abs{a_0}$.  Let
  \begin{equation*}
    \Delta = \set{\delta \in K^+}{\delta \in K \text{ or }
      \Abs{\delta} \leq n \Abs{a_0}}.
  \end{equation*}
  Note that $\Delta$ is closed under conjugation, i.e., if $uv \in
  \Delta$ for $u,v \in K^*$, then $vu \in \Delta$.  We can think of
  $\Delta$ as the set of all ``short'' words.  Choose $t \geq n$ such
  that all normal forms $v_g$ have no factor $\delta^{t+n}$ for
  $\delta \in \Delta$ and such that $\Abs{c^{t}} \geq \Abs{u}$ for all
  $u \in K^{2n}$. Note that $c \in \Delta$ has the smallest
    weight among all words in $\Delta$.
  
  The first set of rules
  over the extended alphabet $K$ deals with long repetitions of short
  words: The $\Delta$-rules are
  \begin{equation*}
    T_\Delta = \set{\delta^{t+n} \to
    \delta^{t}}{\delta \in \Delta \text{ and $\delta$ is primitive}}.
  \end{equation*}
  Let $F \subseteq K^*$ contain all words which are a factor of some
  $\delta^+$ for $\delta \in \Delta$ and let $J \subseteq K^+$ be
  minimal such that $K^* J K^* = K^* \setminus F$. By
  \reflem{lem:efin}, we have $J \subseteq K^{2n}$. In particular, $J$
  is finite. Since $J$ and $\Delta$ are disjoint, all words in $J$
  have a weight greater than $n \Abs{a_0}$.  Let $\Omega$ contain all
  $\omega \in J$ such that $\omega \in \Gamma K^*$ implies $\omega
  = \gamma \gamma'$ for some $\gamma > \gamma'$, i.e.,
  \begin{equation*}
    \Omega = J \cap \set{\omega \in K^*}{\omega \not\in \Gamma K^* \text{ or }
      \omega =\gamma \gamma' \text{ for some } \gamma > \gamma'}.
  \end{equation*}
  As we will see below, every sufficiently long word without long
  $\Delta$-repetitions contains a factor $\omega \in \Omega$.

  \begin{claim}\label{clm:long:oo}
    There exists a bound $t' \in \N$ such that every word $u \in K^*$
    with $\Abs{u} \geq t'$ contains a factor $\omega \in \Omega$ or a
    factor of the form $\delta^{t+n}$ for $\delta \in \Delta$.
  \end{claim}
  
  \noindent
  \textit{Proof of \refclm{clm:long:oo}.}
  Let $t'' = (t+n+2) \cdot \max\set{\Abs{v} \in \N}{v \in K}$.  First,
  suppose $u \in K^* \setminus K^* \Gamma K^*$ and $\Abs{u} \geq
  t''$. If $u$ is a factor of $\delta^+$, then $\delta^{n+d}$ is a
  factor of $u$ since $\Abs{\delta} \leq \max\set{\Abs{v} \in \N}{v
    \in K}$. Thus we can assume $u \in K^* \setminus F$. By definition
  of $J$, the word $u$ contains a factor $\omega \in J$. We have
  $\omega \in \Omega$ because $u$ (and thus $\omega$) has no factor in
  $\Gamma$.

  If $u \in K^* b \gamma K^*$ for $b \in K \setminus \Gamma$ and
  $\gamma \in \Gamma$, then $u$ contains a factor $\omega = b \gamma
  \in \Omega$. Similarly, if $u \in K^* \gamma \gamma' K^*$ for
  $\gamma,\gamma' \in \Gamma$ and $\gamma > \gamma'$, then $u$
  contains a factor $\omega = \gamma \gamma' \in \Omega$. Thus, if $u
  \in K^* \Gamma K^*$, then we can assume $u = \gamma_{i_1} \cdots
  \gamma_{i_k} u'$ with 
  \begin{itemize}
  \item $\gamma_{i_j} \in \Gamma$ and $\gamma_{i_1}
    \leq \cdots \leq \gamma_{i_k}$, and
  \item $u' \not\in K^* \Gamma K^*$ and $\Abs{u'} < t''$.
  \end{itemize}
  We set $t' = (t+n-1) \cdot \abs{\Gamma} \cdot \max\set{\Abs{v} \in
    \N}{v \in \Gamma} + 1 + t''$.  If $\Abs{u} \geq t'$, then $k \geq
  (t+n-1) \cdot \abs{\Gamma} + 1$. By the pigeon hole principle, there
  exists $\gamma \in \oneset{\gamma_{i_1}, \ldots, \gamma_{i_k}}
  \subseteq \Delta$ such that $\gamma^{t+n}$ is a factor of $u$.
  This completes the proof of \refclm{clm:long:oo}.~~~$\diamond$

  \medskip

  Since $\Delta$ is closed under factors, $u$ contains no factor of
  the form $\delta^{t+n}$ for $\delta \in \Delta$ if and only if $u
  \in \IRR_{T_\Delta}(K^*)$. In particular, it is no restriction to
  only allow primitive words from $\Delta$ in the rules $T_\Delta$.
  Every sufficiently long word $u'$ can be written as $u' =
  u_1 \cdots u_k$ with $\Abs{u_i} \geq t'$ and $k$ sufficiently large.
  Thus, by repeatedly applying \refclm{clm:long:oo}, there exists a
  non-negative integer $d_\Omega$ such that every word $u' \in
  \IRR_{T_\Delta}(K^*)$ with $\Abs{u'} \geq t_\Omega$ contains two
  occurrences of the same $\omega \in \Omega$ which are far
  apart. More precisely, $u'$ has a factor $\omega\, u\, \omega$ with
  $\Abs{u} > \Abs{v_g}$ for all $g \in G$.

  This suggests rules of the form $\omega \, u \, \omega \to \omega\,
  v_{\varphi(u)} \,\omega$; but in order to ensure confluence we have
  to limit their use.  For this purpose, we equip $\Omega$ with a
  linear order $\preceq$ such that $\gamma_m \gamma_0$ is the smallest
  element, and every element in $\Omega \cap K^+ \gamma_0$ is smaller
  than all elements in $\Omega \setminus K^+ \gamma_0$. By making
  $t_\Omega$ bigger, we can assume that every word $u'$ with $\Abs{u'}
  \geq t_\Omega$ contains a factor $\omega\, u\, \omega$ such that
  \begin{itemize}
  \item $\Abs{u} > \Abs{v_g}$ for all $g \in G$, and
  \item for every factor $\omega' \in \Omega$ of
    $\omega\, u\, \omega$ we have $\omega' \preceq \omega$.
  \end{itemize}
  The following claim is one of the main reasons for using the above
  definition of the normal forms $v_g$, and also for excluding all
  words $\omega \in \Gamma K^*$ in the definition of~$\Omega$ except
  for $\omega = \gamma\gamma' \in \Gamma^2$ with $\gamma > \gamma'$.

  \begin{claim}\label{clm:oovoo}
    Let $\omega,\omega' \in \Omega$ and $g \in G$. If $\omega \, v_g
    \, \omega \in K^* \omega' K^*$, then $\omega' \preceq \omega$.
  \end{claim}
  
  \noindent
  \textit{Proof of \refclm{clm:oovoo}.}
  All normal forms $v_g$ have $\gamma_m \gamma_0$ as a suffix. In
  addition, the word $\gamma_m \gamma_0$ is the only element in
  $\Omega$ which is a factor of some $v_g$ for $g \in G$. The reason
  is that all other letters in $v_g$ are in non-decreasing order
  whereas all $\gamma \gamma' \in \Omega$ are in decreasing order.  In
  particular, if $\gamma_m \gamma_0 \, v_g \, \gamma_m \gamma_0 \in
  K^* \omega' K^*$ for $\omega' \in \Omega$, then $\omega' = \gamma_m
  \gamma_0$, i.e., $\gamma_m \gamma_0$ is the only factor of $\gamma_m
  \gamma_0 \, v_g \, \gamma_m \gamma_0$ which is in $\Omega$.
  
  Let now $\omega = b \gamma_0$ for $b \in K \setminus
  \smallset{\gamma_0}$. Note that $\omega \in \Omega$ and that all
  elements in $\Omega \cap K^+ \gamma_0$ have this form.  Then the set of
  factors of $\omega v_g \omega$ which are in $\Omega$ is
  $\oneset{\gamma_m \gamma_0, \omega}$. Since $\gamma_m \gamma_0$ is
  the smallest element with respect to $\preceq$, each of them
  satisfies the claim.

  Next, suppose $\omega \in K^+ b$ for $b \in K \setminus
  \smallset{\gamma_0}$. Then the set of factors of $\omega v_g \omega$
  which are in $\Omega$ is $\oneset{\gamma_m \gamma_0, b \gamma_0,
    \omega}$. Since every element ending with $\gamma_0$ is smaller than
  any other element in $\Omega$, the claim also holds in this case.
  This completes the proof of \refclm{clm:oovoo}.~~~$\diamond$

  \medskip

  We are now ready to define the second set of rules over the extended
  alphabet $K$. They are reducing long words without long repetitions
  of words in $\Delta$. We set
  \begin{equation*}
    T_\Omega' = \set{\omega \, u \, \omega \to \omega\,
      v_{\varphi(u)} \,\omega}{
      \parbox{7.1cm}{$\Abs{v_{\varphi(u)}} 
        < \Abs{u} \leq
      t_\Omega \text{ and } $ \\ 
      $\omega \,u \,\omega \text{ has no
        factor } \omega' \in \Omega \text{ with } \omega \prec
      \omega'$}}.
  \end{equation*}
  Whenever there is a shorter rule in $ T_{\Omega}' \cup T_\Delta$
  then we want to give preference to this shorter rule. Thus the
  $\Omega$-rules are
  \begin{equation*}
    T_\Omega = \set{\ell \to r \in T_\Omega'}{
      \parbox{6.2cm}{$\text{there is no rule } 
      \ell' \to r' \in T_{\Omega}' \cup T_\Delta$ \\
      $\text{such that } 
      \ell' \text{ is a proper factor of } \ell$}}.
  \end{equation*}
  Let now
  \begin{equation*}
    T = T_\Delta \cup T_\Omega \, .
  \end{equation*}

  \begin{claim}\label{clm:confl}
    The system $T$ is locally confluent over $K^*$.
  \end{claim}

  \noindent
  \textit{Proof of \refclm{clm:confl}.} The system $T_\Delta$ is
  confluent by \reflem{lem:econfl}. Suppose we can apply two rules
  $\ell \to r \in T_\Omega$ and $\ell' \to r' \in T_\Delta$. Then
  $\ell'$ is not a factor of $\ell$. Let $\ell = \omega u 
  \omega$. Since $\omega$ is not a factor of $\ell'$, it is possible
  to first apply $\ell \to r$ and then apply $\ell' \to r'$.
  Moreover, by choice of $d$ we have $\Abs{\omega} \leq
  \Abs{r'}$. Thus we also can first apply $\ell' \to r'$ and then
  $\ell \to r$. 

  If $u \in \IRR_{T_\Delta}(K^*)$ and $u \RA{T_\Omega} v$, then $v \in
  \IRR_{T_\Delta}(K^*)$ by definition of the normal forms $v_g$ and
  the set $\Omega$.
  Thus, it remains to show that $T_\Omega$ is locally confluent on
  $\IRR_{T_\Delta}(K^*)$. By minimality of $J$, no $\omega \in \Omega$
  is a proper factor of another word $\omega' \in \Omega$. Let $\omega
  u \omega \to r$ and $\omega' u' \omega' \to r'$ be two
  $\Omega$-rules with $\omega \neq \omega'$. By construction of
  $T'_\Omega$, the left sides of both rules can overlap at most $\min
  \oneset{\abs{\omega}, \abs{\omega'}} - 1$ positions. Thus the two
  rules can always be applied independently of one another.

  Let now $\omega u \omega \to \omega v_g \omega$ and $\omega u'
  \omega \to \omega v_h \omega$ be two $\Omega$-rules. By construction
  of $T_\Omega$, neither is $\omega u' \omega$ a proper factor of
  $\omega u \omega$ nor vice versa. If $x \omega = \omega y$ for some
  $x,y \in K^+$ with $\Abs{x} \leq n \Abs{a_0}$, then $x \in \Delta$
  and $\omega$ is a prefix of $x^+$ which contradicts the definition
  of $J \subseteq K^* \setminus F$. Therefore, whenever $x \omega =
  \omega y$ for $x, y \in K^+$ then $\Abs{x} > n \Abs{a_0}$ and
  $\Abs{y} > n \Abs{a_0}$.
  Suppose now $x \omega u \omega = \omega u' \omega y = \omega u''
  \omega$ for $x,y \in K^+$. If $\abs{x} \geq \abs{\omega u}$, then
  the two rules can be applied independently of one another. Thus let
  $\abs{x} < \abs{\omega u}$.  As seen before, we have $\Abs{x} > n
  \Abs{a_0}$ and $\Abs{y} > n \Abs{a_0}$. We will show
  \begin{equation*}
    x \,\omega\, v_g\, \omega \;\RAS*{T_\Omega}\; 
    \omega\, v_{\varphi(u'')}\, \omega 
    \;\LAS*{T_\Omega}\; \omega \, v_h \, \omega \, y.
  \end{equation*}
  If $x \,\omega\, v_g\, \omega \in K^* \omega' K^*$ or $\omega \, v_h
  \, \omega \, y \in K^* \omega' K^*$, then by \refclm{clm:oovoo} we
  have $\omega' \preceq \omega$.
  We can write $x \omega = \omega x'$. Since $\Abs{x'} = \Abs{x} > n
  \Abs{a_0}$, we have $\Abs{x' v_g} > n \Abs{a_0} + \Abs{v_g} >
  \Abs{v_{g'}}$ for every $g' \in G$. This relies on the fact that the
  weights all normal forms $v_{g'}$ differ less than $n \Abs{a_0}$.
  This shows that the weight of $x' v_g$ is sufficiently high. If
  $\Abs{x' v_g} > t_\Omega$, then by \refclm{clm:long:oo} we have $x'
  v_g \RAS*{T_\Omega} x''$ such that $\Abs{v_{g'}} < \Abs{x''} \leq
  t_\Omega$ for every $g' \in G$. Therefore, without loss of
  generality we can assume that the weight of $x' v_g$ is not too
  high, i.e., $\Abs{x' v_g} \leq t_\Omega$. Since $\varphi(x' v_g) =
  \varphi(u'')$, we have $x \omega v_g \omega \RAS*{T_\Omega} \omega
  v_{\varphi(u'')} \omega$. Similarly, $\omega v_h \omega y
  \RAS*{T_\Omega} \omega v_{\varphi(u'')} \omega$.
  This completes the proof of \refclm{clm:confl}.~~~$\diamond$

  \medskip

  Since all rules in $T$ are weight-reducing, local confluence implies
  confluence. Moreover, all rules $\ell \to r$
  in $T$ satisfy $\varphi(\ell) = \varphi(r)$. We conclude that $T$ is
  a weighted Church-Rosser system such that $K^* / T$ is finite
  and $\varphi : K^* \to G$ factorizes through $T$. Remember that
  every element in $K^*$ can be read as a sequence of elements in
  $A^*$.  Thus every $u \in K^*$ can be interpreted as a word $u \in
  A^*$.  We use this interpretation in order to apply the rules in $T$
  to words in $A^*$; but in order to not destroy $K$-letters when
  applying rules in $R$, we have to guard the first $K$-letter
  of every $T$-rule by appending the letter $c$. This leads to the
  system
  \begin{equation*}
    T' = \set{c\ell \to cr \in A^* \times A^*}{\ell \to r \in T}.
  \end{equation*}
  Combining the rules $R$ over the alphabet $B$ with the
  $T'$-rules yields
  \begin{equation*}
    S = R \cup T'.
  \end{equation*}
  Since left sides of $R$-rules and of $T'$-rules can not overlap, the
  system $S$ is confluent. By definition, each $S$-rule is
  weight-reducing. This means that $S$ is a weighted
  Church-Rosser system. We have
  \begin{align*}
    \IRR_S(A^*) \ = \ 
    \IRR_R(B^*) \;\cup \;
    \IRR_R(B^*) \cdot \IRR_{T'}\Big( 
      c\big(\IRR_R(B^*) c\big)^* \Big) \cdot \IRR_R(B^*).
  \end{align*}
  Therefore $\IRR_S(A^*)$ and $A^* / S$ are finite.  Since all rules
  $\ell \to r$ in $S$ satisfy $\varphi(\ell) = \varphi(r)$, the
  homomorphism $\varphi$ factorizes through $S$.
%
\end{proof}
%

\section{Arbitrary Finite Monoids}\label{sec:arbitrary}

This section contains the main result of this paper. We show that
every homomorphism $\varphi : A^* \to M$ to finite monoid factorizes
through a weighted Church-Rosser system $S$ with finite index.
The proof relies on \refthm{thm:group} and on a construction called
local divisors.

\subsection{Local divisors}

The notion of {\em local divisor} has turned out to be a rather
powerful tool when using inductive proofs for finite monoids, see
e.g.~\cite{dg08SIWT:short,DiekertKS11,DiekertKW12tcs}. The same is
true in this paper. The definition of a local divisor is as follows:
Let $M$ be a monoid and let $c \in M$.  We equip $cM \cap Mc$ with a
monoid structure by introducing a new multiplication~$\circ$ as
follows:
\begin{equation*}
  xc \circ cy = xcy.
\end{equation*}
It is straightforward to see that $\circ$ is well-defined and $(cM
\cap Mc, \circ)$ is a monoid with neutral element $c$. 
  
The following observation is crucial.  If $1 \in {cM \cap Mc}$, then
$c$ is a unit. Thus if the monoid $M$ is finite and $c$ is not a unit,
then $\abs{cM \cap Mc} < \abs{M}$.  The set $M' = \set{x}{cx \in Mc}$
is a submonoid of $M$, and $c{\cdot}: M' \to cM \cap Mc : x \mapsto
cx$ is a surjective homomorphism. Since $(cM \cap Mc,
\circ)$ is the homomorphic image of a submonoid, it is a divisor of
$M$.  We therefore call $(cM \cap Mc, \circ)$ the {\em local divisor}
of $M$ at $c$.

\subsection{The main result}

We are now ready to prove our main result: Every homomorphism $\varphi
: A^* \to M$ to a finite monoid factorizes through a weighted
Church-Rosser system $S$ with finite index. The proof uses induction
on the size of $M$ and the size of $A$. If $\varphi(A^*)$ is a group,
then we apply \refthm{thm:group}; and if $\varphi(A^*)$ is not a
group, then we find a letter $c \in A$ such that $c$ is not a
unit. Thus in this case we can use local divisors.

\begin{theorem}\label{thm:main}
  Let $(A,\Abs{\cdot})$ be a weighted alphabet and let $\varphi : A^*
  \to M$ be a homomorphism to a finite monoid $M$. Then there exists a
  weighted Church-Rosser system $S$ of finite index such that
  $\varphi$ factorizes through $S$.
\end{theorem}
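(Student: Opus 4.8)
The plan is to prove \refthm{thm:main} by a double induction on the pair $(\abs{M}, \abs{A})$, ordered lexicographically, using \refthm{thm:group} as the base case whenever the image $\varphi(A^*)$ is a group. The outline sketched just before the statement tells us the two regimes: if $\varphi(A^*)$ is a group we invoke \refthm{thm:group} directly and are done. Otherwise there must exist a letter $c \in A$ whose image $\varphi(c)$ is not a unit in $M$; indeed, if every $\varphi(a)$ were a unit then $\varphi(A^*)$ would be a group, contradicting our assumption. Fixing such a $c$, we peel it off the alphabet by setting $B = A \setminus \smallset{c}$ and applying the induction hypothesis (on the smaller alphabet $B$) to obtain a weighted Church-Rosser system $R$ with finite index through which the restriction $\varphi : B^* \to M$ factorizes.

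The core idea is then to work over the extended alphabet $K = \IRR_R(B^*)\,c$, exactly as in the proof of \refthm{thm:group}. By \reflem{lem:irr} we may assume any finite set of short words is $R$-irreducible, so $K$ is a prefix code and every word in $A^*$ factors uniquely as an element of $\IRR_R(B^*) \cdot (\IRR_R(B^*)c)^* \cdot$ (a remainder). The crucial gain is that the local divisor $(cM \cap Mc, \circ)$ is a proper divisor of $M$, hence strictly smaller since $c$ is not a unit; so the multiplicative structure carried by the $K$-letters is governed by a monoid of size $\abs{cM \cap Mc} < \abs{M}$. We define a homomorphism $\psi : K^* \to cM \cap Mc$ sending each extended letter $u = wc \in K$ to $c\varphi(u) = c\,\varphi(w)\,\varphi(c) \in cM \cap Mc$, checking that $\psi$ respects $\circ$ because concatenation $u_1 u_2 = w_1 c w_2 c$ in $K^*$ corresponds to $c\varphi(w_1)c \circ c\varphi(w_2)c = c\varphi(w_1 w_2 c)\cdot(\text{shift})$; the point is that the leading $c$ shields the product into the local divisor. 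Applying the induction hypothesis to $\psi : K^* \to cM\cap Mc$ (smaller monoid) yields a weighted Church-Rosser system $T$ over $K^*$ with finite index through which $\psi$, and hence the relevant part of $\varphi$, factorizes.

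The remaining step is the transfer back to $A^*$, again mirroring \refthm{thm:group}: we guard the left-hand side of each $T$-rule with the letter $c$, setting
\begin{equation*}
  T' = \set{c\ell \to cr \in A^* \times A^*}{\ell \to r \in T},
\end{equation*}
so that $T'$-rules cannot be triggered on the $R$-irreducible prefix and cannot destroy a $K$-letter during an $R$-reduction. We then take $S = R \cup T'$. Since the left-hand sides of $R$-rules lie in $B^*$ while every $T'$-rule begins with $c$, no overlap is possible, so local confluence of the union reduces to local confluence of the two parts separately, giving confluence; weight-reduction is inherited rule by rule. One computes $\IRR_S(A^*)$ explicitly as
\begin{equation*}
  \IRR_R(B^*) \;\cup\; \IRR_R(B^*)\cdot \IRR_{T'}\big(c(\IRR_R(B^*)c)^*\big)\cdot \IRR_R(B^*),
\end{equation*}
which is finite because both $\IRR_R(B^*)$ and $K^*/T$ are finite, so $A^*/S$ is finite. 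Finally every $S$-rule $\ell \to r$ satisfies $\varphi(\ell) = \varphi(r)$ by construction, so $\varphi$ factorizes through $S$.

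The main obstacle I expect is verifying that $\varphi$ really does factorize through the $T'$-rules, i.e.\ that applying a local-divisor rule $\ell \to r$ over $K$ and then re-reading words in $A^*$ preserves the value of $\varphi$ rather than merely that of $\psi$. The subtlety is that $\psi$ records only $c\varphi(u) = c\varphi(u)$ inside $cM \cap Mc$, whereas we need $\varphi(c\ell) = \varphi(cr)$ in $M$ itself. This works precisely because every left-hand side is prefixed by $c$: the equality $\psi(\ell) = \psi(r)$ reads $c\varphi(\ell) = c\varphi(r)$, and since $c\ell$ and $cr$ both begin with $c$, the guarding letter converts the local-divisor identity $c\varphi(\ell) = c\varphi(r)$ into the genuine $M$-identity $\varphi(c\ell) = \varphi(c)\varphi(\ell) = \varphi(c)\varphi(r) = \varphi(cr)$. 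Making this bookkeeping precise — correctly defining $\psi$ on $K^*$, confirming it is a homomorphism into the local divisor, and tracking the shielding letter through every reduction — is the technical heart of the argument.
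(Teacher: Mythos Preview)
Your proposal is correct and follows essentially the same approach as the paper: lexicographic induction on $(\abs{M},\abs{A})$, invoking \refthm{thm:group} when $\varphi(A^*)$ is a group, otherwise removing a non-unit letter $c$, passing to the local divisor $\varphi(c)M \cap M\varphi(c)$ via $\psi(uc) = \varphi(cuc)$ on the prefix code $K = \IRR_R(B^*)\,c$, and combining the resulting system $T'$ with $R$. Your identification of the key verification---that $\psi(\ell) = \psi(r)$ in the local divisor lifts to $\varphi(c\ell) = \varphi(cr)$ in $M$ precisely because of the guarding $c$---is exactly the computation the paper performs.
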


\begin{proof}
  The proof is by induction on $(\abs{M},\abs{A})$ with lexicographic
  order. If $\varphi(A^*)$ is a group, then the claim follows by
  \refthm{thm:group}. If $\varphi(A^*)$ is not a group, then there
  exists $c \in A$ such that $\varphi(c)$ is not a unit.  Let $B = A
  \setminus \smallset{c}$. By induction on the size of the alphabet
  there exists a weighted Church-Rosser system $R$ for the
  restriction $\varphi : B^* \to M$ satisfying the statement of the
  theorem. Let
  \begin{equation*}
    K = \IRR_R(B^*) c.
  \end{equation*}
  We consider the prefix code $K$ as a weighted alphabet. The weight
  of a letter $uc \in K$ is the weight $\Abs{uc}$ when read as a word
  over the weighted alphabet $(A,\Abs{\cdot})$. Let $M_c = \varphi(c)
  M \cap M \varphi(c)$ be the local divisor of $M$ at $\varphi(c)$.
  We let $\psi : K^* \to M_c$ be the homomorphism induced by $\psi(uc)
  = \varphi(cuc)$ for $uc \in K$. By induction on the size of the
  monoid there exists a weighted Church-Rosser system $T
  \subseteq K^* \times K^*$ for $\psi$ satisfying the statement of the
  theorem. Suppose $\psi(\ell) = \psi(r)$ for $\ell,r \in K^*$ and let
  $\ell = u_1 c \cdots u_j c$ and $r = v_1 c \cdots v_k c$ with
  $u_i,v_i \in \IRR_R(B^*)$. Then
  \begin{align*}
    \varphi(c \ell) 
    &= \varphi(cu_1c) \circ \cdots \circ \varphi(cu_jc) \\
    &= \psi(u_1 c) \circ \cdots \circ \psi(u_j c) \\
    &= \psi(\ell) = \psi(r) = \varphi(c r).
  \end{align*}
  This means that every $T$-rule $\ell \to r$ yields an
  $\varphi$-invariant rule $c\ell \to cr$.  Thus we can transform the
  system $T \subseteq K^* \times K^*$ for $\psi$ into a system $T'
  \subseteq A^* \times A^*$ for $\varphi$ by
  \begin{equation*}
    T' = \set{c\ell \to cr \in A^* \times A^*}{\ell \to r \in T}.
  \end{equation*}
  Since $T$ is confluent and weight-reducing over $K^*$, the system
  $T'$ is confluent and weight-reducing over $A^*$. Combining $R$ and
  $T'$ leads to
  \begin{equation*}
    S = R \cup T'.
  \end{equation*}
  The left sides of a rule in $R$ and a rule in $T'$ cannot overlap.
  Therefore, $S$ is a weighted Church-Rosser system such that
  $\varphi$ factorizes through $A^* / S$. Suppose that every word in
  $\IRR_T(K^*)$ has length at most $k$. Here, the length is over the
  extended alphabet $K$. Similarly, let every word in $\IRR_R(B^*)$
  have length at most~$m$. Then
  \begin{equation*}
    \IRR_S(A^*) \subseteq \set{u_0 c u_1 \cdots c u_{k'+1}}{ 
      u_i \in \IRR_R(B^*), \; k' \leq k}
  \end{equation*}
  and every word in $\IRR_S(A^*)$ has length at most $(k+2)m$. In
  particular $\IRR_S(A^*)$ and $A^* / S$ are finite.
%
\end{proof}

The following corollary is a straightforward translation of the result
in \refthm{thm:main} about homomorphisms to a statement about regular
languages.

\begin{corollary}\label{cor:main}
  A language $L \subseteq A^*$ is regular if and only if there exists
  a Church-Rosser system $S$ of finite index such that $L =
  \bigcup_{u \in L} [u]_S$.
\end{corollary}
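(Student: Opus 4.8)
The plan is to prove \refcor{cor:main} by deriving both implications from \refthm{thm:main}, treating the theorem as a black box. The statement is an equivalence between regularity of $L \subseteq A^*$ and the existence of a Church-Rosser system $S$ of finite index with $L = \bigcup_{u \in L} [u]_S$. Note first that the corollary speaks of an ordinary Church-Rosser system (finite, length-reducing, confluent), whereas \refthm{thm:main} produces a weighted Church-Rosser system. The reconciliation is the simplest possible: taking the length weight $\Abs{a} = 1$ for every $a \in A$ makes weight-reducing coincide with length-reducing, so a weighted Church-Rosser system for this particular weight \emph{is} a Church-Rosser system. Thus I would fix this weight throughout the argument.

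For the forward direction, suppose $L$ is regular. Then $L$ is recognized by a finite monoid, so there is a homomorphism $\varphi : A^* \to M$ with $M$ finite and $L = \varphi^{-1}\varphi(L)$. Applying \refthm{thm:main} to $\varphi$ (with the length weight) yields a weighted Church-Rosser system $S$ of finite index such that $\varphi$ factorizes through $S$; since the weight is the length, $S$ is an ordinary Church-Rosser system of finite index. It remains to check $L = \bigcup_{u \in L}[u]_S$. The inclusion $L \subseteq \bigcup_{u \in L}[u]_S$ is immediate since $u \in [u]_S$. For the reverse inclusion, if $v \in [u]_S$ for some $u \in L$, then $u \DAS*{S} v$, and because $\varphi$ factorizes through $S$ we get $\varphi(v) = \varphi(u) \in \varphi(L)$; hence $v \in \varphi^{-1}\varphi(L) = L$. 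This shows $L$ is a union of $S$-classes, namely exactly those classes it meets.

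For the converse direction, suppose $S$ is a Church-Rosser system of finite index with $L = \bigcup_{u \in L}[u]_S$. Since $S$ has finite index, the quotient $A^* / S$ is a finite monoid, and the canonical map $\pi : A^* \to A^*/S$, $\pi(u) = [u]_S$, is a homomorphism. The condition $L = \bigcup_{u \in L}[u]_S$ says precisely that $L$ is saturated by the congruence $\DAS*{S}$, i.e.{} $L = \pi^{-1}\pi(L)$, so $A^*/S$ recognizes $L$. As $A^*/S$ is finite, $L$ is regular by the definition of regularity recalled in \refsec{sec:prelim}.

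None of these steps presents a genuine obstacle; the content lies entirely in \refthm{thm:main}, and the corollary is bookkeeping. The one point requiring care is to state explicitly that the length weight collapses the weighted notion to the ordinary one, so that the ``weighted Church-Rosser system'' delivered by the theorem qualifies as the ``Church-Rosser system'' demanded by the corollary; overlooking this would leave a gap between the two notions. I would therefore foreground that remark at the outset and then let both implications follow by the saturation argument above.
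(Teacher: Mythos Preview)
Your proposal is correct and follows essentially the same approach as the paper: apply \refthm{thm:main} with the length weight to obtain an ordinary Church-Rosser system through which $\varphi$ factorizes, then use saturation to conclude $L = \bigcup_{u \in L}[u]_S$, with the converse being the trivial observation that $A^*/S$ is a finite monoid recognizing $L$. You are in fact more explicit than the paper about the choice of the length weight and about spelling out the converse direction, both of which the paper leaves implicit or dismisses as trivial.
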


\begin{proof}
  If $L$ is regular, then there exists a homomorphism $\varphi : A^*
  \to M$ recognizing $L$. By \refthm{thm:main} there exists a finite
  Church-Rosser system $S$ of finite index such that $\varphi$
  factorizes through $S$. The latter property implies $\varphi^{-1}(x)
  = \bigcup_{u \in \varphi^{-1}(x)} [u]_S$ for every $x \in M$. Thus
  $L = \bigcup_{x \in \varphi(L)} \varphi^{-1}(x) = \bigcup_{u \in L}
  [u]_S$.
  The converse is trivial. 
  %
\end{proof}

In particular, we see that all regular languages are Church-Rosser
congruential.


{\small
\newcommand{\Ju}{Ju}\newcommand{\Ph}{Ph}\newcommand{\Th}{Th}\newcommand{\Ch}{Ch}\newcommand{\Yu}{Yu}\newcommand{\Zh}{Zh}

}

\end{document}